\documentclass[pdflatex,sn-mathphys-num]{sn-jnl}

\usepackage{graphicx}%

\usepackage{multirow}%
\usepackage{amsmath,amssymb,amsfonts}%
\usepackage{amsthm}%
\usepackage{mathrsfs}%
\usepackage[title]{appendix}%
\usepackage{xcolor}%
\usepackage{textcomp}%
\usepackage{manyfoot}%
\usepackage{booktabs}%
\usepackage{algorithm}%
\usepackage{algorithmicx}%
\usepackage{algpseudocode}%
\usepackage{listings}%
\usepackage{physics}%
\usepackage{tikz}%
\theoremstyle{thmstyleone}%
\usepackage{quantikz}
\theoremstyle{thmstyletwo}%

\theoremstyle{thmstylethree}%
\newtheorem{conjecture}{Conjecture}%
\newtheorem{theorem}{Theorem}%
\newtheorem{proposition}{Proposition}%
\newtheorem{corollary}{Corollary}%
\newtheorem{lemma}{Lemma}%

\pgfdeclarelayer{background}
\pgfsetlayers{background,main}
\usetikzlibrary{calc, arrows.meta}

\newcommand{\mc}[1]{\mathcal{#1}}
\newcommand{\mb}[1]{\mathbb{#1}}
\newcommand{\xl}{\lambda}
\renewcommand{\L}{\mc{L}}

\newcommand{\id}{\mathbb{I}}

\newcommand{\ghz}[2]{\text{GHZ}_{#1,#2}}

\newcommand{\E}{\mc{E}}

\newcommand{\spin}{	
	\begin{tikzpicture}
		\filldraw[black] (0,0) circle (0.35cm);
		\draw[ line width=1.8pt, -{Triangle[length=10pt,width=8pt]}] (0,-.8) -- (0,1);
	\end{tikzpicture}
}

\newcommand{\CSN}{\mathbb{S}_\lambda}
\newcommand{\ISN}{\mathbb{S}_I}
\newcommand{\F}{F^{*}}
\newcommand{\G}{{K_N^{(k)}}}
\newcommand{\target}{\ghz{N}{d^m}}

\providecommand{\red}[1]{}
\providecommand{\blue}[1]{}
\renewcommand{\red}[1]{{\color{red}#1}}
\renewcommand{\blue}[1]{{\color{blue}#1}}

\begin{document}
	
	\title{Optimal GHZ-State Distribution in LOSR Quantum Networks via Local Decoding from Information Sets} 
	\author*[1]{\fnm{Leonardo} \sur{Oleynik}}\email{leonardo.oleynik@uni.lu}
	\equalcont{These authors contributed equally to this work.} 
	\author*[1]{\fnm{Shehbaz} \sur{Tariq}}\email{shehbaz.tariq@uni.lu}
	\equalcont{These authors contributed equally to this work.} 
	\author[1]{\fnm{Symeon} \sur{Chatzinotas}}\email{symeon.chatzinotas@uni.lu}
	\equalcont{These authors contributed equally to this work.}
	
	\affil*[1]{\orgdiv{Interdisciplinary Centre for Security, Reliability, and Trust (SnT)}, \orgname{University of Luxembourg}, \orgaddress{\postcode{L-1855}, \city{Luxembourg City}, \country{Luxembourg}}}

	\abstract{Distributing multipartite entanglement is a prerequisite for scalable quantum networks. Networks restricted to local operations and shared randomness (LOSR) avoid the quantum-memory and latency costs that real-time classical communication imposes on LOCC-based networks, but with only bipartite sources they cannot prepare usefull GHZ states. In earlier work we conjectured that multipartite sources lift this restriction, supporting the claim with a single numerical example. Here we prove the conjecture for regular and uniform networks of arbitrary size. Identifying the hyperedges of the network with the coordinates of a linear code, we show that whenever the edges incident to each node form an information set, a fixed set of local unitaries---the local decoders of the code---converts the source state into an $N$-party GHZ state with fidelity $d^{\,m-M}$ while using no classical communication, where $m$ and $M$ are the number of edges per node and in total. For the complete hypergraphs $K_N^{(N-1)}$ this fidelity equals $1/d$, and we prove that it is optimal among local-unitary strategies, surpassing the bipartite bound (for instance $1/2$ versus $1/8$ in the four-node case). Multipartite sources together with shared randomness can therefore replace real-time communication for entanglement distribution.}
		
	
	\keywords{Quantum networks, entanglement distribution, LOSR, multipartite entanglement, GHZ states}
	
	\maketitle
	\section{\label{sec: intro}Introduction}
	
	Quantum networks (QNs) are expected to extend quantum information processing across distant sites, ultimately enabling secure distributed computation between remote parties~\cite{Qinternet,Kimble2008,Wehner2018}. This prospect has made them a central theme of recent theoretical and experimental work. At the heart of the programme lies the generation and distribution of multipartite entangled resources---graph and GHZ states in particular---which underpin tasks such as quantum key distribution~\cite{48QKD,QKDexp}, blind quantum computation~\cite{BQC7}, and quantum metrology~\cite{Qmetro1,Qmetro2,Qmetro3}. Although such states have been realized in small or carefully controlled settings~\cite{12control,16control}, distributing them over large distances and across many parties remains difficult, mainly because the precision required for coherent control grows steeply with the number of systems involved.
	
	What a network can distribute is dictated by the resources its nodes share. In the most elementary case, nodes hold independent local states and coordinate only through classical communication (CC), as sketched in the leftmost of Fig.~\ref{fig: networks}. Such purely classical networks are simple to operate but cannot create multipartite entanglement, and therefore offer no quantum advantage. Supplying neighbouring nodes with bipartite entanglement in addition to CC changes the picture entirely: with sufficiently entangled bipartite resources, local operations and classical communication (LOCC) can in principle generate any multipartite state, GHZ states included. The price is operational---LOCC protocols rely on quantum memories that store each system coherently until the necessary classical messages are exchanged. The resulting latency is negligible in the laboratory but becomes significant once a network reaches intercontinental scales.

	This motivates networks governed by local operations and shared randomness (LOSR), a low-latency middle ground between the classical and LOCC regimes. Here neighbouring nodes still share bipartite entanglement, but rather than exchanging messages in real time they rely on a classical variable distributed beforehand (Fig.~\ref{fig: networks}); once the quantum systems are in place, no further communication occurs. Such common randomness, together with a shared time reference, is already distributed at intercontinental scale by Global navigation satellite system broadcasts, so this prior coordination needs no real-time link.%
	Removing real-time CC eliminates the latency problem, yet it also constrains the network's ability to generate multiparitite entanglement, and the severity of that constraint is far from obvious. A sequence of recent results has progressively tightened the limits on multipartite entanglement in LOSR networks~\cite{20GNME,21EwUni,21GNME,22QNandS,23GNME,24Otfried,25Otfried-clone}, to the point of questioning whether any meaningful advantage survives at all~\cite{25Otfried}. The sharpest of these shows that, with bipartite entangled sources and more than three parties, LOSR networks cannot prepare GHZ states any better than classical networks can~\cite{25Otfried}---a clear no-go for entanglement distribution in QNs with bipartite resources and no real-time communication.
	
	In a recent conference paper~\cite{11500272} we proposed a route around this obstruction: increasing the shared resources from bipartite to multipartite entanglement. There we conjectured that complete $k$-uniform hypergraph networks with multipartite sources can surpass the bipartite bound, and we supported the conjecture with a single instance --- a four-node network with tripartite sources, for which numerical optimization returned a GHZ fidelity of $1/2$ against the bipartite value $1/8$, realized by a heuristically chosen set of \textsc{swap} and \textsc{cnot} unitaries. That evidence was limited in three respects: it concerned only $N=4$, the optimality of $1/2$ was inferred from numerics rather than proved, and the local unitaries were chosen \emph{ad hoc}. The present work removes all three limitations. We prove the conjecture for any regular and uniform hypergraph network; we replace the \emph{ad hoc} unitaries by a systematic construction in which the hyperedges of the network are identified with the coordinates of a linear code and the required local unitaries are the code's local decoders; and we prove analytically that, for complete hypergraphs $K_N^{(N-1)}$, the resulting fidelity $1/d$ is optimal and achievable by our systematic construction.
	
	The resource produced this way is operationally meaningful: although entanglement distillation may be unavailable in the LOSR paradigm~\cite{LOSRnew}, the entanglement of noisy multipartite states can still be localized~\cite{04ent_local,05ent_local} or concentrated~\cite{13ent_cluster}. From a resource theory perspective, our results show that shared randomness combined with multipartite sources can be traded for real-time CC in the distribution of multipartite entanglement, pointing to a concrete resource with which low-latency networks may serve as a practical alternative to fully fledged LOCC-based ones.
	
	The remainder of the paper is organized as follows. Section~\ref{sec: ps} sets up hypergraph LOSR networks and the fidelity-based figure of merit we use to measure the multipartite entanglement they can generate. Section~\ref{sec: results} presents our results: the information-set construction and its fidelity (Sec.~\ref{subsec: construction}), the proof of advantage and optimality for the complete networks $K_N^{(N-1)}$ (Sec.~\ref{subsec: advantage}), a numerical local-unitary benchmark---Riemannian optimization over the product-unitary manifold---that confirms optimality for $K_N^{(N-1)}$ and corroborates the construction at other source arities (Sec.~\ref{subsec:lu_optimization}), and the explicit four-node example that recovers and explains our earlier findings (Sec.~\ref{subsec: example}). Section~\ref{sec: conc} concludes. Detailed proofs are collected in the Appendix.
	
	\begin{figure*}[htbp]
		\centering
		\begin{tikzpicture}
			\node (a) at (0,0) {	\begin{tikzpicture}[
		font=\footnotesize, 
		text centered,
		line_in/.style={-{Latex},semithick,dotted,red!70!black},
		vertex2/.style={circle, fill=yellow!70, minimum size=2*\r,inner sep=0pt},
		source/.style={draw, shape=rectangle, minimum size=\r}
		]
		\pgfmathsetlengthmacro{\L}{\textwidth/3.7} 
		\pgfmathsetlengthmacro{\r}{\L/12}

		\coordinate (C) at (0,0);
		\coordinate (B) at (\L,0);
		\coordinate (A) at (60:\L); 
		\coordinate (G) at ($ (C)!.333!(B) + (C)!.333!(A) $);

		\coordinate (c1) at (60:\r);
		\coordinate (c2) at (30:\r);
		\coordinate (c3) at (\r,0);

		\coordinate (a1) at ($ (A)+ (-60:\r)$);		
		\coordinate (a2) at ($ (A)+ (0,-\r)$);
		\coordinate (a3) at ($ (A)+ (-90-30:\r)$);
		
		\coordinate (b1) at ($ (B)+ (-\r,0)$);		
		\coordinate (b2) at ($ (B)+ (90+60:\r)$);
		\coordinate (b3) at ($ (B)+ (90+30:\r)$);

		\fill[red!70!black] (G) circle (2pt) node[above right] {CC};
		
		


		\draw[line_in] (G) -- (a2);
		\draw[line_in] (G) -- (b2);
		\draw[line_in] (G) -- (c2);
		

		\node[vertex2] at (A) {};
		\node[scale=.2] at (A) { \spin};		
		\node[vertex2] at (B) {};		
		\node[scale=.2] at (B) { \spin};		
		\node[vertex2] at (C) {};		
		\node[scale=.2] at (C) { \spin};		
		

	\end{tikzpicture} };
			\node at (.33\textwidth,0){			\begin{tikzpicture}[
		font=\footnotesize, 
		text centered,
		line_in/.style={-{Latex},semithick,dashed, blue},
		line_e/.style={-{Latex},semithick},
		vertex2/.style={circle, fill=yellow!70, minimum size=2*\r,inner sep=0pt},
		source/.style={draw, shape=rectangle, minimum size=\r}
		]
		\pgfmathsetlengthmacro{\L}{\textwidth/3.7}  
		\pgfmathsetlengthmacro{\r}{\L/12}

		\coordinate (C) at (0,0);
		\coordinate (B) at (\L,0);
		\coordinate (A) at (60:\L); 
		\coordinate (G) at ($ (C)!.333!(B) + (C)!.333!(A) $);

		\coordinate (c1) at (60:\r);
		\coordinate (c2) at (30:\r);
		\coordinate (c3) at (\r,0);

		\coordinate (a1) at ($ (A)+ (-60:\r)$);		
		\coordinate (a2) at ($ (A)+ (0,-\r)$);
		\coordinate (a3) at ($ (A)+ (-90-30:\r)$);
		
		\coordinate (b1) at ($ (B)+ (-\r,0)$);		
		\coordinate (b2) at ($ (B)+ (90+60:\r)$);
		\coordinate (b3) at ($ (B)+ (90+30:\r)$);

		\fill[blue] (G) circle (2pt) node[above right] {SR};
		
		\node[source, rotate=30] (a) at ($ (A)!0.5!(B) $) {$\gamma$};
		\node[source					 ] (b) at ($ (B)!0.5!(C) $) {$\alpha$};
		\node[source, rotate=-30] (c) at ($ (A)!0.5!(C) $) {$\beta$};
		
		\node[scale=.2] at ($(a)!0.3!(A)$) {\spin};
		\node[scale=.2,rotate=180] at ($(a)!0.3!(B)$) {\spin};
		\node[scale=.2] at ($(b)!0.3!(B)$) {\spin};
		\node[scale=.2,rotate=180] at ($(b)!0.3!(C)$) {\spin};
		\node[scale=.2] at ($(c)!0.3!(C)$) {\spin};
		\node[scale=.2, rotate=180] at ($(c)!0.3!(A)$) {\spin};

		\node[rotate=90,scale=.7,above] (ga) at ($ (G)!0.3!(A)$) {$...01001...$};
		\node[rotate=-30,scale=.7,above] (gb) at ($ (G)!0.3!(B)$) {$...01001...$};
		\node[rotate=30,scale=.7,above] (gc) at ($ (G)!0.3!(C)$) {$...01001...$};
	
		\draw[line_in] (G) -- (a2);
		\draw[line_in] (G) -- (b2);
		\draw[line_in] (G) -- (c2);
		
		\draw[line_e] (a) -- (a1);
		\draw[line_e] (a) -- (b3);
		\draw[line_e] (b) -- (b1);
		\draw[line_e] (b) -- (c3);
		\draw[line_e] (c) -- (c1);
		\draw[line_e] (c) -- (a3);

		\node[vertex2] at (A) {};
		\node[vertex2] at (B) {};		
		\node[vertex2] at (C) {};		

	\end{tikzpicture}};
			\node at (.66\textwidth,0) {			\begin{tikzpicture}[
		font=\footnotesize, 
		text centered,
		line_in/.style={-{Latex},semithick,dotted, red!70!black},
		line_e/.style={-{Latex},semithick},
		vertex2/.style={circle, fill=yellow!70, minimum size=2*\r,inner sep=0pt},
		source/.style={draw, shape=rectangle, minimum size=\r}
		]
		\pgfmathsetlengthmacro{\L}{\textwidth/3.7}  
		\pgfmathsetlengthmacro{\r}{\L/12}

		\coordinate (C) at (0,0);
		\coordinate (B) at (\L,0);
		\coordinate (A) at (60:\L); 
		\coordinate (G) at ($ (C)!.333!(B) + (C)!.333!(A) $);
		
		\coordinate (c1) at (60:\r);
		\coordinate (c2) at (30:\r);
		\coordinate (c3) at (\r,0);
		
		\coordinate (a1) at ($ (A)+ (-60:\r)$);		
		\coordinate (a2) at ($ (A)+ (0,-\r)$);
		\coordinate (a3) at ($ (A)+ (-90-30:\r)$);
		
		\coordinate (b1) at ($ (B)+ (-\r,0)$);		
		\coordinate (b2) at ($ (B)+ (90+60:\r)$);
		\coordinate (b3) at ($ (B)+ (90+30:\r)$);

		\fill[red!70!black] (G) circle (2pt) node[above right] {CC};
		
		\node[source, rotate=30] (a) at ($ (A)!0.5!(B) $) {$\gamma$};
		\node[source					 ] (b) at ($ (B)!0.5!(C) $) {$\alpha$};
		\node[source, rotate=-30] (c) at ($ (A)!0.5!(C) $) {$\beta$};
		
		\node[scale=.2] at ($(a)!0.3!(A)$) {\spin};
		\node[scale=.2,rotate=180] at ($(a)!0.3!(B)$) {\spin};
		\node[scale=.2] at ($(b)!0.3!(B)$) {\spin};
		\node[scale=.2,rotate=180] at ($(b)!0.3!(C)$) {\spin};
		\node[scale=.2] at ($(c)!0.3!(C)$) {\spin};
		\node[scale=.2, rotate=180] at ($(c)!0.3!(A)$) {\spin};

		
		\draw[line_in] (G) -- (a2);
		\draw[line_in] (G) -- (b2);
		\draw[line_in] (G) -- (c2);
		
		\draw[line_e] (a) -- (a1);
		\draw[line_e] (a) -- (b3);
		\draw[line_e] (b) -- (b1);
		\draw[line_e] (b) -- (c3);
		\draw[line_e] (c) -- (c1);
		\draw[line_e] (c) -- (a3);

		\node[vertex2] at (A) {};
		\node[vertex2] at (B) {};		
		\node[vertex2] at (C) {};		

	\end{tikzpicture}};
		\end{tikzpicture}
		\caption{\label{fig: networks}From left to right: classical, LOSR, and LOCC triangle networks. On the left, each node (yellow circle) holds a separate quantum state and may communicate over a classical channel (dotted line). In the middle, each node holds one share of each of the two bipartite sources on its adjacent edges, e.g.\ $\beta$ and $\gamma$, and coordinates its local operations through a classical distribution shared \emph{a priori} (dashed line). On the right, nodes share bipartite entanglement and communicate classically.}
	\end{figure*}

	\section{\label{sec: ps} Entanglement in LOSR Quantum Networks}
	
	We start from the smallest LOSR network that is not trivial. Following~\cite{22QNandS}, let Alice, Bob, and Charlie attempt to synthesize a tripartite state $\sigma$ using only local processing of three bipartite sources $\sigma_{ab'},\sigma_{bc'}$ and $\sigma_{ca'}$ of arbitrary dimension $d\times d$, the two shares of each source being sent to different nodes. Every node applies a local map $\E_X^\xl$ selected by a shared classical variable $\xl$, so the reachable states are
	\begin{equation}
		\sigma = \sum_{\xl} p_\xl     \E_A^\xl \otimes \E_B^\xl \otimes \E_C^\xl    \pqty{\sigma_{ca'}\otimes\sigma_{bc'}\otimes\sigma_{ab'}}.
		\label{eq:def-rho-tri}
	\end{equation}
	The extreme points of this set are the independent triangle network states~\cite{21EwUni}
	\begin{equation}
		\sigma^\xl =      \E_A^\xl \otimes \E_B^\xl \otimes \E_C^\xl    \pqty{\sigma_{ca'}\otimes\sigma_{bc'}\otimes\sigma_{ab'}},
	\end{equation}
	and a general feasible state is a convex mixture of them. The maps $\E_X^\xl$ may be any trace-preserving positive maps, hence describe arbitrary channels, discarding included~\cite{20GNME}. 
	
	To accommodate sources shared by more than two nodes, we describe the network by an undirected hypergraph $G=(V,E)$, with $V$ and $E$ the sets of vertices and edges. A $k$-partite source is a $k$-edge $e=\{i_1,\dots,i_k\}\in E$ joining its $k$ nodes; the nodes are the elements of $V$, and $N=\abs{V}$. As before, each node $i\in V$ applies a local map $\E_i^{\xl}$ coordinated by $\xl$, now acting on the $d^{m_i}$-dimensional system formed by its $m_i=\abs{E(i)}$ incident sources, where $E(i)=\{e\in E\,\mid\, i\in e\}$. Any correlation the network can realize then has the form
	\begin{equation}
		\rho_G = \sum_{\xl} p^\xl    \pqty{\otimes_{i\in V} \E_i^{\xl}}  \pqty{\otimes_{e\in E} \rho_e},
		\label{eq: def-rho}
	\end{equation}
	with $p^\xl$ a probability distribution over $\xl$, and its extreme points are the independent hypergraph network states
	\begin{equation}
		\rho^\xl = \pqty{\otimes_{i\in V} \E_i^{\xl}}  \pqty{\otimes_{e\in E} \rho_e}.
		\label{eq: def-in}
	\end{equation}
	We write $\CSN$ and $\ISN$ for the sets of states of the forms \eqref{eq: def-rho} and \eqref{eq: def-in}, respectively. A $k$-uniform hypergraph is \emph{$m$-regular} when every vertex is incident to the same number of edges, $m_i=m$ for all $i\in V$; the common degree $m$ then fixes the local dimension $d^{m}$ at every node. A \emph{complete} $k$-uniform hypergraph---one in which every $k$-subset of $V$ is an edge---is the special case in which each vertex meets all $\binom{N-1}{k-1}$ edges that contain it; it is thus $m$-regular, with $m=\binom{N-1}{k-1}$ and $M=\abs{E}=\binom{N}{k}$ edges in total, and we denote it $\G$. 
	
	To quantify the multipartite entanglement such a network can deliver, we use the largest fidelity attainable with the target $N$-party GHZ state of local dimension $d^m$, $\ket{\ghz{N}{d^m}}=d^{-m/2}\sum_{i=0}^{d^m-1}\ket{i}^{\otimes N}$:
	\begin{equation}
		\F_G= \max_{\rho_G \in\CSN} F(\rho_G,\ket{\target}).
		\label{eq: F-star}
	\end{equation}
	Because $F$ is convex in $\rho_G$ over $\CSN$, the maximum is attained at an extreme point, so it suffices to optimize over the independent states $\ISN$:
	\begin{equation}
		\F_G = \max_{\rho^\xl\in\ISN} F(\rho^\xl,\ket{\target}).
		\label{eq:bound}
	\end{equation}
	
%
	
	Intuitively, sharing multipartite entanglement among several nodes at once should let a network distribute more of it than sharing only bipartite entanglement between neighbours, the more so as multipartite entanglement is structurally far richer than bipartite entanglement~\cite{entangComplex}. We therefore expect $k$-uniform networks to outperform their $2$-uniform counterparts, the latter being upper bounded by $1/d^m$ ~\cite{25Otfried}, where $d^{m}$ denotes the dimension of the target GHZ state. We first stated this expectation in~\cite{11500272}, where it was verified only for a square network associated to a complete $3$-uniform hypergraph; we generalize it here as the claim to be proved.
	
	
	\begin{conjecture}\label{conj:k-partite}
		($k$-partite advantage): $\forall N$ and $2\leq k$ there is a $m$-regular $k$-uniform hypergraph $G=\G$ such that $\F_{K_N^{(2)}} \leq 1/d^{m} \leq \F_\G$.
	\end{conjecture}

	\section{\label{sec: results} Results}
	
	Throughout this section---both for the construction and for the optimality proof---we restrict to local unitaries $U_i$ acting on pure, maximally entangled sources. This is the natural setting in which to exhibit advantage: general local channels only enlarge the feasible set, so any fidelity reached with unitaries is a valid lower bound on $\F_G$~\cite{25Otfried}. Our results are twofold. We first give a construction (Sec.~\ref{subsec: construction}) that, for any network whose incident edges form information sets, produces explicit local unitaries reaching fidelity $d^{\,m-M}$ with the GHZ target. We then prove (Sec.~\ref{subsec: advantage}) that for the complete hypergraphs $K_N^{(N-1)}$ this value equals $1/d$ and is optimal, which simultaneously establishes Conjecture~\ref{conj:k-partite} for every $N$ and shows that the construction cannot be improved. Section~\ref{subsec: example} works out the four-node case explicitly.
	
	\subsection{Information-set construction}
	\label{subsec: construction}
	
	We present a constructive local-unitary ansatz that turns a hypergraph source state into a high-dimensional $N$-party GHZ state with maximum fidelity, requiring no classical communication. The construction rests on a linear code whose coordinates are identified with the hyperedges of the network: each node should be able to reconstruct one and the same global message from only the source labels available to it, which is precisely the role of an \emph{information set} in coding theory. We assume throughout that every node meets the same number of edges, $m_i=m\;\forall i$ (an \emph{$m$-regular} hypergraph), as is the case for complete $k$-uniform hypergraphs.
	
	Identify the $M$ hyperedges of $G$ with the coordinates of a linear code over $\mb{F}_d$,
	\begin{equation}
		\mc{C} = \left\{\, \vb{x} = G\vb{z} \,:\, \vb{z}\in\mb{F}_d^{m} \,\right\} \subseteq \mb{F}_d^{M},
		\label{eq:info-code}
	\end{equation}
	generated by a full-column-rank matrix $G\in\mb{F}_d^{M\times m}$; $\vb{z}$ is the message and $\vb{x}=G\vb{z}$ the codeword carried by the edges. For node $i$, let $G_i=G_{E(i),:}\in\mb{F}_d^{m\times m}$ be the submatrix of $G$ formed by the rows indexed by the incident edges $E(i)$, and let $\vb{x}_{E(i)}$ denote the subvector of $\vb{x}$ on those coordinates. The construction applies whenever every incident submatrix is invertible,
	\begin{equation}
		\operatorname{rank}_{\mb{F}_d}(G_i)=m, \qquad \forall i\in V,
		\label{eq:info-set-condition}
	\end{equation}
	i.e.\ the coordinates of each node form an information set of $\mc{C}$. On the codeword sector $\vb{x}=G\vb{z}$ a node then observes $\vb{x}_{E(i)}=G_i\vb{z}$ and recovers the message by applying the local decoder
	\begin{equation}
		D_i = G_i^{-1}, \qquad D_i\,\vb{x}_{E(i)} = G_i^{-1}G_i\vb{z} = \vb{z}.
		\label{eq:decoder}
	\end{equation}
	Each $D_i$ is an invertible $\mb{F}_d$-linear map, hence a permutation of the computational basis, and therefore induces a local permutation unitary on the $d^m$-dimensional space of node $i$,
	\begin{equation}
		U_i\ket{\vb{y}}_i = \ket{D_i\vb{y}}_i, \qquad \vb{y}\in\mb{F}_d^{m}.
		\label{eq:U-i}
	\end{equation}
	The protocol applies the product unitary
	\begin{equation}
		U_{\mathrm{IS}} = \bigotimes_{i\in V}U_i .
		\label{eq:U-IS}
	\end{equation}
	Since the $U_i$ are fixed in advance and act only on local registers, no classical communication is needed at runtime.
	
	Applied to the state generated by maximally entangled $\ket{\ghz{k}{d}}$ sources, the construction reaches a fidelity that depends only on the two code parameters $m$ and $M$ (Appendix~\ref{app: fidelity}):
	\begin{proposition}\label{prop:fidelity}
		Let $\ket{\phi}$ be the trivial network state\footnote{The \emph{trivial network state} is the independent network state~\eqref{eq: def-in} in which every local map $\E_i^{\xl}$ is the identity, i.e.\ the state before any operation is applied.} of a regular $k$-uniform hypergraph with $\ket{\ghz{k}{d}}$ sources, and let the information-set condition \eqref{eq:info-set-condition} hold. Then
		\begin{equation}
			F_{\mathrm{IS}} = F\!\left(U_{\mathrm{IS}}\ket{\phi},\,\ket{\ghz{N}{d^m}}\right) = d^{\,m-M}.
			\label{eq:F-IS}
		\end{equation}
	\end{proposition}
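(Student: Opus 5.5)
The plan is a direct computation in the computational basis. First I will write the trivial network state explicitly. Each source is $\ket{\ghz{k}{d}}=d^{-1/2}\sum_{x_e\in\mb{F}_d}\ket{x_e}^{\otimes k}$, with one copy of $\ket{x_e}$ held by each node $i\in e$. Taking the tensor product over all $M$ edges and regrouping the registers by node gives
\begin{equation*}
\ket{\phi}=d^{-M/2}\sum_{\vb{x}\in\mb{F}_d^{M}}\bigotimes_{i\in V}\ket{\vb{x}_{E(i)}}_i .
\end{equation*}
Here I fix, once and for all, an ordering of the $m$ local registers of node $i$ that matches the row order of $G_i$. This makes $\vb{x}_{E(i)}$ a vector in $\mb{F}_d^m$ and makes \eqref{eq:U-i} meaningful. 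By \eqref{eq:U-i} and \eqref{eq:U-IS}, applying $U_{\mathrm{IS}}$ simply relabels each branch:
\begin{equation*}
U_{\mathrm{IS}}\ket{\phi}=d^{-M/2}\sum_{\vb{x}}\bigotimes_i\ket{D_i\vb{x}_{E(i)}}_i .
\end{equation*}
Since $U_{\mathrm{IS}}\ket{\phi}$ is pure, the fidelity is the squared modulus of its overlap with the target.

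Next I will compute the overlap with $\ket{\ghz{N}{d^m}}=d^{-m/2}\sum_{\vb{z}\in\mb{F}_d^m}\ket{\vb{z}}^{\otimes N}$. Computational basis states are orthonormal, so the branch labelled $\vb{x}$ contributes only to the term $\vb{z}$ with $D_i\vb{x}_{E(i)}=\vb{z}$ for every $i$, and then it contributes exactly $1$. The amplitude is therefore
\begin{equation*}
d^{-(M+m)/2}\cdot\#\{(\vb{x},\vb{z}) : G_i^{-1}\vb{x}_{E(i)}=\vb{z}\ \forall i\in V\}.
\end{equation*}

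The key step is counting these pairs, and it is the one point that needs care. For fixed $\vb{z}$, the constraints say $\vb{x}_{E(i)}=G_i\vb{z}=(G\vb{z})_{E(i)}$ for all $i$. Every hyperedge contains $k\ge 1$ nodes, so every coordinate $e$ lies in some $E(i)$. Hence the constraints force $x_e=(G\vb{z})_e$ for every $e$, that is, $\vb{x}=G\vb{z}$. Conversely, $\vb{x}=G\vb{z}$ satisfies all constraints, because $G_i$ is invertible by \eqref{eq:info-set-condition}. So each $\vb{z}$ has exactly one partner $\vb{x}$, and the count is $d^m$.

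The main obstacle is precisely this step. I need to argue carefully that non-codeword branches cannot accidentally align across all nodes, which is exactly the uniqueness just shown. I also need to confirm that no edge coordinate escapes every node, since otherwise extra free $x_e$ would inflate the count. With the count settled, the amplitude is $d^{m}d^{-(M+m)/2}=d^{(m-M)/2}$. Squaring gives $F_{\mathrm{IS}}=d^{\,m-M}$, as claimed in \eqref{eq:F-IS}. Finally I will note that consistency requires $M\ge m$, which holds because $G$ has full column rank.
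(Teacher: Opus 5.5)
Your proposal is correct and follows essentially the same route as the paper's proof in Appendix~\ref{app: fidelity}. Both expand $\ket{\phi}$ in the computational basis, let $U_{\mathrm{IS}}$ relabel each branch, and use the covering $\bigcup_i E(i)=E$ to show that exactly the $d^m$ pairs $(\vb{x},\vb{z})$ with $\vb{x}=G\vb{z}$ contribute, which gives amplitude $d^{(m-M)/2}$. Your argument is slightly cleaner than the paper's wording: you attribute the uniqueness of $\vb{x}$ to this covering, and use invertibility of $G_i$ only to define the decoders and to check that $G\vb{z}$ satisfies the constraints.
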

	The benchmark to surpass is the product (classical) value $d^{-m}$, which bipartite-source LOSR networks cannot exceed~\cite{25Otfried}. The construction therefore yields a quantum advantage precisely when $d^{\,m-M}>d^{-m}$, that is when
	\begin{equation}
		M<2m \quad\Longleftrightarrow\quad \binom{N}{k}<2\binom{N-1}{k-1} \quad\Longleftrightarrow\quad N<2k .
		\label{eq:advantage-condition}
	\end{equation}
	Hence, whenever a complete $k$-uniform network satisfies $N<2k$, multipartite sources outperform every bipartite-source network on the same task.
	
	\subsection{Advantage and optimality for the complete networks \texorpdfstring{$K_N^{(N-1)}$}{K	extsubscript{N}	extasciicircum(N-1)}}
	\label{subsec: advantage}
	
	The boundary case $k=N-1$ (equivalently $N=k+1$) is the most symmetric and gives the cleanest statement. Every $(N-1)$-subset of $V$ is then an edge, so there are $M=\binom{N}{N-1}=N$ edges, each node meets $m=\binom{N-1}{N-2}=N-1$ of them, and the target is $\ket{\ghz{N}{d^{N-1}}}$. The fidelity \eqref{eq:F-IS} becomes
	\begin{equation}
		F_{\mathrm{IS}} = d^{\,(N-1)-N} = \frac{1}{d}.
		\label{eq:F-NN1}
	\end{equation}
	Observe that the complete graph $K_N^{(2)}$ on the same $N$ nodes also has $m=N-1$ and the \emph{same} target $\ket{\ghz{N}{d^{N-1}}}$, yet \cite{25Otfried} bounds it above by $1/d^{N-1}$, whereas multipartite sources reach $1/d$. We now show that this value is optimal.
	\begin{theorem}\label{thm:optimality}
		For $N\geq 4$, the complete hypergraph $K_N^{(N-1)}$ with $\ket{\ghz{N-1}{d}}$ sources satisfies
		\begin{equation}
			\F_{K_N^{(N-1)}} = \max_{U=\otimes_i U_i}\abs{\mel{\ghz{N}{d^{N-1}}}{U}{\phi}}^2 = \frac{1}{d},
		\end{equation}
		the maximum being over all local unitaries and attained by $U_{\mathrm{IS}}$.
	\end{theorem}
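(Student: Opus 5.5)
The equality has two parts: \emph{achievability}, which follows from Proposition~\ref{prop:fidelity} once we exhibit a suitable code, and a matching \emph{upper bound}, which we obtain from a Schmidt-decomposition argument across a well-chosen bipartition. Local unitaries leave the Schmidt coefficients of any bipartite cut unchanged, so they cannot improve this bound.

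\emph{Achievability.} We need a generator matrix $G\in\mb{F}_d^{N\times(N-1)}$ in which every $N-1$ of its $N$ rows are linearly independent. Recall that the edges of $K_N^{(N-1)}$ are indexed by the node they miss, and node $i$ is incident to all edges except the one missing $i$. We take $G$ whose first $N-1$ rows are $\id_{N-1}$ and whose last row is the all-ones vector; this is the parity (single-parity-check) code. Deleting the identity row $j$ leaves a matrix with determinant $\pm1$, so it is invertible over any $\mb{F}_d$. Deleting the all-ones row leaves $\id$. Hence the information-set condition \eqref{eq:info-set-condition} holds for every node, and Proposition~\ref{prop:fidelity} gives $F_{\mathrm{IS}}=d^{(N-1)-N}=1/d$ as in \eqref{eq:F-NN1}. (For $d$ not prime one works over $\mb{Z}_d$; the same determinant argument shows the maps are bijections.)

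\emph{Upper bound.} Fix two nodes $i\neq j$ and consider the cut $A=\{i,j\}$ versus $B=V\setminus\{i,j\}$. Any product unitary $\otimes_l U_l$ is of the form $U_A\otimes U_B$ for this cut, so we may bound $\max_{U_A,U_B}\abs{\mel{\psi}{U_A\otimes U_B}{\phi}}$ with $\psi=\ghz{N}{d^{N-1}}$. By von Neumann's trace inequality (equivalently, the singular values of the coefficient matrices), this maximum equals $\sum_k\sqrt{\mu_k\lambda_k}$, where $\mu_k$ and $\lambda_k$ are the Schmidt coefficients of $\psi$ and $\phi$ across the cut, each sorted in decreasing order.
\begin{itemize}
\item For $\psi$ the Schmidt coefficients are $d^{-(N-1)}$, with multiplicity $d^{N-1}$.
\item For $\phi$ the cut is a product over edges, so we count each edge's contribution. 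Every edge contains $i$ or $j$, since an edge misses only one node. Moreover, every edge meets $B$: an edge has $N-1$ nodes and misses only one, so it contains at least $N-3\geq1$ nodes of $B$, using $N\geq 4$. Each $\ghz{N-1}{d}$ source split nontrivially therefore contributes $d$ equal Schmidt coefficients $1/d$.
\item Consequently $\phi$ has $d^N$ Schmidt coefficients, all equal to $d^{-N}$.
\end{itemize}
The sum then has $d^{N-1}$ nonzero terms, giving
\begin{equation}
\abs{\mel{\psi}{U}{\phi}}\le d^{N-1}\sqrt{d^{-(N-1)}d^{-N}}=d^{-1/2},
\end{equation}
and squaring yields $F\le 1/d$.

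Finally, we must justify restricting to unitaries on pure maximally entangled sources. The theorem's statement already defines the quantity as a maximum over product unitaries. The only step needing care is the Schmidt counting, in particular checking that no edge lies entirely inside $A$; this is exactly where $N\geq4$ enters. Everything else is routine. I expect the main subtlety is choosing the two-node cut: single-node cuts give only the trivial bound $1$, because there $\phi$ and $\psi$ have identical Schmidt spectra.
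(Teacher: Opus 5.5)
Your proof is correct and follows the paper's argument. Achievability comes from Proposition~\ref{prop:fidelity}, and the upper bound comes from von Neumann's trace inequality across a cut with at least two nodes on each side; your $\{i,j\}$-versus-rest cut is a special case of the paper's $2\le\abs{S}\le N-2$, and the uniform spectra give $d^{N-1}\sqrt{d^{-(N-1)}d^{-N}}=d^{-1/2}$. Beyond the paper, you verify for general $N$ that the single-parity-check code satisfies the information-set condition (the paper leaves this implicit), and your $\mathbb{Z}_d$ remark covers non-prime $d$.
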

	The proof is given in Appendix~\ref{app: optimality}: achievability is the construction of Sec.~\ref{subsec: construction}, while the upper bound follows by relaxing to a balanced bipartition and using that both the GHZ target and the GHZ-source state have uniform Schmidt spectra across it. This establishes optimality within local-unitary strategies on maximally entangled sources, which is the regime in which the advantage is demonstrated; general local channels enlarge the feasible set and can only raise $\F_G$~\cite{25Otfried}, so whether they exceed $1/d$ is left open.
	
	Because $1/d\geq 1/d^{N-1}$ for every $N$ (strictly for $N\geq 3$), choosing $k=N-1$ gives $\F_{K_N^{(2)}}\leq 1/d^{N-1}\leq\F_{K_N^{(N-1)}}$, which proves the conjecture in full generality.
	\begin{corollary}\label{cor:conjecture}
		Conjecture~\ref{conj:k-partite} holds for every $N$, witnessed by the complete hypergraph $K_N^{(N-1)}$.
	\end{corollary}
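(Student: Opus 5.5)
The plan is to verify the two inequalities of Conjecture~\ref{conj:k-partite} separately for the witness $G=K_N^{(N-1)}$, i.e.\ $k=N-1$. The conjecture requires $k\geq 2$, so we take $N\geq 3$. For this hypergraph we have $M=N$ edges, each node is incident to $m=N-1$ of them, and it is $m$-regular, so it is an admissible witness. The chain to establish is
\begin{equation*}
\F_{K_N^{(2)}}\;\leq\;\frac{1}{d^{N-1}}\;\leq\;\frac{1}{d}\;\leq\;\F_{K_N^{(N-1)}} .
\end{equation*}
The middle inequality is trivial, since $d\geq 2$ and $N\geq 2$.

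For the right inequality I will not invoke Theorem~\ref{thm:optimality}, since it carries the hypothesis $N\geq 4$. Only achievability is needed, and it follows from Proposition~\ref{prop:fidelity} once I exhibit, for every $N$, a generator matrix in $\mb{F}_d^{N\times(N-1)}$ whose $N-1$ rows at every node are invertible. Node $i$ meets every edge except the one edge not containing $i$, so its incident rows are all rows but one. Hence the information-set condition~\eqref{eq:info-set-condition} becomes: deleting any single row of $G$ leaves an invertible matrix.

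I would take $G=\begin{pmatrix}\id_{N-1}\\ \mathbf{1}^{T}\end{pmatrix}$, the single-parity-check code, and check the two kinds of deletion:
\begin{itemize}
\item Deleting the last row leaves $\id_{N-1}$.
\item Deleting row $j\leq N-1$ leaves the identity with row $j$ replaced by the all-ones row. Its determinant is $1$, so it is invertible over $\mb{F}_d$ for every $d$.
\end{itemize}
Proposition~\ref{prop:fidelity} then gives $\F_{K_N^{(N-1)}}\geq F_{\mathrm{IS}}=d^{\,m-M}=1/d$, because $\F_G$ is a maximum over a set that contains the local-unitary strategy $U_{\mathrm{IS}}$.

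For the left inequality I import the bipartite bound $\F_{K_N^{(2)}}\leq 1/d^{m}$ of~\cite{25Otfried}. The target $\ket{\ghz{N}{d^{N-1}}}$ and the value $m=N-1$ coincide for $K_N^{(2)}$ and $K_N^{(N-1)}$, so the two sides of the conjecture are compared on the same task.

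The main obstacle is the range of validity of this external bound. It is stated for more than three parties, and it genuinely fails at $N=3$: there $K_3^{(2)}=K_3^{(N-1)}$ is the triangle itself, and the construction above already reaches $1/d>1/d^{2}$ on it. So for $N\geq 4$ the chain closes exactly as written. For $N=3$, which is the smallest case allowed by $k\geq2$, I would state explicitly that the right inequality $1/d^{m}\leq\F_{K_3^{(2)}}$ still holds by the same construction. The left inequality, however, must be read as the bipartite benchmark of~\cite{25Otfried}, which is only claimed for $N>3$. Spelling out this boundary case is the one place where care is needed; everything else is the short assembly above.
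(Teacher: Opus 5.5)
Your proof is correct and follows the paper's own chain: the bipartite bound of \cite{25Otfried} on the left, $1/d^{N-1}\leq 1/d$ in the middle, and the information-set decoders reaching $1/d$ on the right, where you rightly use only the achievability half of Theorem~\ref{thm:optimality}, and your single-parity-check generator (every $(N-1)$-row deletion has determinant $\pm1$) supplies the information-set check that the paper leaves implicit for general $N$. Your $N=3$ diagnosis is also right: there the witness is $K_3^{(2)}$ itself and the construction gives $1/d>1/d^{2}$, so the left inequality is false, and rather than ``reading'' it as a benchmark you should state plainly that the corollary holds only for $N\geq 4$, which is also the only range the paper's argument, resting on the same citation, actually covers.
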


    \subsection{Numerical local-unitary benchmark}
\label{subsec:lu_optimization}

We further benchmark the information-set construction by directly optimizing over
the same local-unitary class used above. Let $D=d^m$ denote the local Hilbert-space
dimension at each node. The local-unitary fidelity is defined as
\begin{align}
F_G^{\mathrm{LU}}
=
\max_{\{U_i\}_{i\in V}}
\abs{
\mel{\target}
{
\bigotimes_{i\in V} U_i
}
{\phi}
}^{2},
\qquad
U_i\in \mathrm{U}(D),
\label{eq:lu_optimization}
\end{align}
where $\mathrm{U}(D)$ is the group of $D\times D$ unitary matrices. Since local
unitaries form a subclass of the local operations allowed in LOSR networks,
$F_G^{\mathrm{LU}}\leq F_G^\ast$. The information-set decoder gives a feasible
point of \eqref{eq:lu_optimization}, and therefore
\begin{align}
F_G^{\mathrm{LU}}
\geq
F_{\mathrm{IS}}
=
d^{m-M}.
\label{eq:lu_lower_bound}
\end{align}
For the complete hypergraphs $K_N^{(N-1)}$, Theorem~1 gives the matching upper
bound, so that $F_G^{\mathrm{LU}}=F_{\mathrm{IS}}=1/d$.

The optimization in \eqref{eq:lu_optimization} is nonconvex because each block
must satisfy $U_i^\dagger U_i=\id_D$. We therefore minimize the negative fidelity
\begin{align}
\mc{L}(U_1,\ldots,U_N)
=
-
\abs{
\mel{\target}
{
\bigotimes_{i\in V} U_i
}
{\phi}
}^{2}
\label{eq:lu_loss}
\end{align}
over the product unitary manifold
\begin{align}
\mc{M}_{\mathrm{LU}}
=
\mathrm{U}(D)^{\times N}.
\end{align}
At each iteration, the Euclidean block gradient $\Gamma_i$ is projected onto the
tangent space of $\mathrm{U}(D)$ through the skew-Hermitian matrix
\begin{align}
\Omega_i
=
\frac{1}{2}
\pqty{
U_i^\dagger \Gamma_i
-
\Gamma_i^\dagger U_i
}.
\label{eq:lu_skew_gradient}
\end{align}
The local unitary is then updated as
\begin{align}
U_i
\leftarrow
U_i\exp\pqty{-\eta\Omega_i},
\label{eq:lu_retraction}
\end{align}
where $\eta>0$ is chosen by backtracking line search. This update preserves
$U_i^\dagger U_i=\id_D$ at every iteration. The overlap in
\eqref{eq:lu_loss} is evaluated by contracting the correlated edge-label
configurations $\vb{x}\in\mathbb{F}_d^M$ directly, avoiding construction of the
dense global unitary $\bigotimes_i U_i$ and the full $D^N$-dimensional state.

Fig.~\ref{fig:lu_benchmark} shows the numerical benchmark for $d=2$. In
Fig.~\ref{fig:lu_benchmark}(a), the information-set fidelity
$F_{\mathrm{IS}}=d^{m-M}$ is plotted against the source arity $k$ for
$N=4,5,6$. The fidelity increases as the sources become more multipartite and
reaches $F_{\mathrm{IS}}=1/2$ at the boundary family $k=N-1$. This agrees with
the advantage condition $N<2k$. In Fig.~\ref{fig:lu_benchmark}(b), the
optimization histories are shown for $K_4^{(3)}$, $K_5^{(4)}$, and
$K_6^{(5)}$. In all three cases, the Riemannian search converges to the
analytical value $F_G^{\mathrm{LU}}=1/d=1/2$, consistent with Theorem~1. The
convergence becomes slower as $N$ increases because the local dimension grows as
$D=d^{N-1}$ for $K_N^{(N-1)}$.

For complete $k$-uniform hypergraphs beyond the boundary case $k=N-1$, the local-unitary optimization is used as a numerical benchmark rather than as a proof of optimality. Its convergence to $d^{m-M}$ confirms that the information-set decoder attains the same fidelity as that achieved by direct local-unitary search. However, in the absence of a matching analytical upper bound, this numerical agreement should be interpreted only as supporting evidence for the construction, not as a certificate of global optimality.

\begin{figure}[t]
    \centering
    \begin{minipage}[t]{0.48\linewidth}
        \centering
        \includegraphics[width=\linewidth]{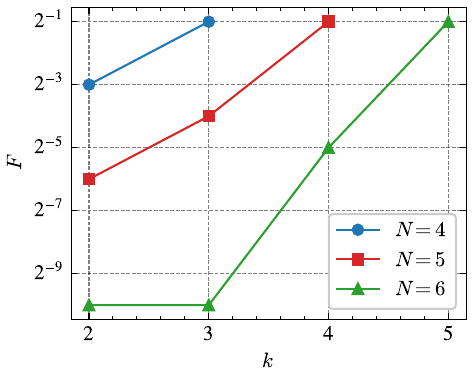}
        \vspace{1mm}
        \textbf{(a)}
    \end{minipage}
    \hfill
    \begin{minipage}[t]{0.48\linewidth}
        \centering
        \includegraphics[width=\linewidth]{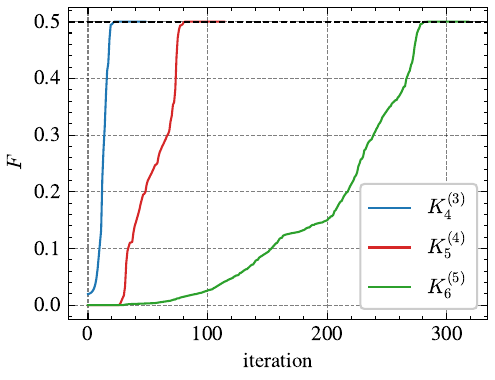}
        \vspace{1mm}
        \textbf{(b)}
    \end{minipage}
    \caption{
    Numerical local-unitary benchmark for $d=2$.
    (a) Information-set fidelity $F_{\mathrm{IS}}=d^{m-M}$ (log scale) versus the source
    arity $k$ for complete $k$-uniform hypergraphs with $N=4,5,6$. The fidelity
    reaches $F_{\mathrm{IS}}=1/2$ for the boundary family $k=N-1$.
    (b) Local-unitary optimization history for $K_4^{(3)}$, $K_5^{(4)}$, and
    $K_6^{(5)}$. The dashed line marks the analytical optimum
    $F_G^{\mathrm{LU}}=1/d=1/2$ from Theorem~1.
    }
    \label{fig:lu_benchmark}
\end{figure}

	\subsection{Explicit example: the complete network \texorpdfstring{$K_4^{(3)}$}{K4(3)}}
	\label{subsec: example}

\begin{figure}[ht]
\centering
\resizebox{\textwidth}{!}{
\begin{tabular}{cccc}
\begin{minipage}{0.24\textwidth}
    \centering
    \small
    $A:\ U_A,\quad (a,d,c)\mapsto(c,a,d)$

    \vspace{2mm}

    \begin{quantikz}[row sep=0.42cm, column sep=0.38cm]
        \lstick{$\ket{a}$} & \qw & \qw & \rstick{$\ket{c}$}\qw \\
        \lstick{$\ket{d}$} & \qw & \qw & \rstick{$\ket{a}$}\qw \\
        \lstick{$\ket{c}$} & \qw & \qw & \rstick{$\ket{d}$}\qw
    \end{quantikz}

    \vspace{1mm}
    \footnotesize
    basis relabeling
\end{minipage}
&
\begin{minipage}{0.24\textwidth}
    \centering
    \small
    $B:\ U_B,\quad (b,a,d)\mapsto(c,a,d)$

    \vspace{2mm}

    \begin{quantikz}[row sep=0.42cm, column sep=0.38cm]
        \lstick{$\ket{b}$} & \targ{}   & \targ{}   & \rstick{$\ket{c}$}\qw \\
        \lstick{$\ket{a}$} & \ctrl{-1} & \qw       & \rstick{$\ket{a}$}\qw \\
        \lstick{$\ket{d}$} & \qw       & \ctrl{-2} & \rstick{$\ket{d}$}\qw
    \end{quantikz}

    \vspace{1mm}
    \footnotesize
    $c=b\oplus a\oplus d$
\end{minipage}
&
\begin{minipage}{0.24\textwidth}
    \centering
    \small
    $C:\ U_C,\quad (c,b,a)\mapsto(c,a,d)$

    \vspace{2mm}

    \begin{quantikz}[row sep=0.42cm, column sep=0.38cm]
        \lstick{$\ket{c}$} & \ctrl{1} & \qw       & \qw       & \rstick{$\ket{c}$}\qw \\
        \lstick{$\ket{b}$} & \targ{}  & \ctrl{1}  & \targ{}   & \rstick{$\ket{a}$}\qw \\
        \lstick{$\ket{a}$} & \qw      & \targ{}   & \ctrl{-1} & \rstick{$\ket{d}$}\qw
    \end{quantikz}

    \vspace{1mm}
    \footnotesize
    $d=b\oplus a\oplus c$
\end{minipage}
&
\begin{minipage}{0.24\textwidth}
    \centering
    \small
    $D:\ U_D,\quad (d,c,b)\mapsto(c,a,d)$

    \vspace{2mm}

    \begin{quantikz}[row sep=0.42cm, column sep=0.38cm]
        \lstick{$\ket{d}$} & \targ{}   & \ctrl{1} & \ctrl{2} & \targ{}   & \rstick{$\ket{c}$}\qw \\
        \lstick{$\ket{c}$} & \qw       & \targ{}  & \qw      & \ctrl{-1} & \rstick{$\ket{a}$}\qw \\
        \lstick{$\ket{b}$} & \ctrl{-2} & \qw      & \targ{}  & \qw       & \rstick{$\ket{d}$}\qw
    \end{quantikz}

    \vspace{1mm}
    \footnotesize
    $a=b\oplus c\oplus d$
\end{minipage}

\end{tabular}
}

\caption{
Local decoder circuits for the information-set ansatz of the complete
tripartite-source network $K_4^{(3)}$. The edge labels are
$x_{ABC}=a$, $x_{ABD}=d$, $x_{ACD}=c$, and $x_{BCD}=b$, with parity relation
$b=a\oplus c\oplus d$. Each node maps its incident labels to the common
message ordering $(c,a,d)$.
}
\label{fig:k43_decoder_circuits}
\end{figure}

	Take $V=\{A,B,C,D\}$ with the four tripartite edges
	\begin{equation}
		E=\{ABC,\,ABD,\,ACD,\,BCD\}, \qquad M=\binom{4}{3}=4, \qquad m=\binom{3}{2}=3,
	\end{equation}
	so each node has local dimension $2^3=8$ and the target is $\ket{\ghz{4}{8}}$. Assign one binary variable to each edge,
	\begin{equation}
		x_{ABC}=a, \quad x_{ABD}=d, \quad x_{ACD}=c, \quad x_{BCD}=b,
	\end{equation}
	so that the trivial state produced by four $\ket{\ghz{3}{2}}$ sources reads
	\begin{equation}
		\ket{\phi}_{ABCD} = \frac{1}{4}\sum_{a,b,c,d=0}^{1}\ket{a,d,c}_A\ket{b,a,d}_B\ket{c,b,a}_C\ket{d,c,b}_D .
		\label{eq:raw43}
	\end{equation}
	We use the single-parity-check code defined by
	\begin{equation}
		b = a\oplus c\oplus d, \qquad \mc{C}=\left\{(a,d,c,b)\in\mb{F}_2^4: b=a\oplus c\oplus d\right\},
		\label{eq:parity43}
	\end{equation}
	a $[4,3]$ binary code. With message $\vb{z}=(c,a,d)$ and coordinate order $(ABC,ABD,ACD,BCD)$, the generator is
	\begin{equation}
		\vb{x}=\begin{bmatrix} a\\ d\\ c\\ b\end{bmatrix}=G\vb{z}, \qquad G=\begin{bmatrix} 0&1&0\\ 0&0&1\\ 1&0&0\\ 1&1&1\end{bmatrix}.
		\label{eq:G43}
	\end{equation}
	Each node sees three of the four coordinates, and any three determine $\vb{z}$, so every node's coordinates form an information set. The local decoders act on the codeword sector \eqref{eq:parity43} as
	\begin{align}
		A&:\ (a,d,c)\mapsto(c,a,d), & B&:\ (b,a,d)\mapsto(b\oplus a\oplus d,a,d)=(c,a,d),\\
		C&:\ (c,b,a)\mapsto(c,a,b\oplus a\oplus c)=(c,a,d), & D&:\ (d,c,b)\mapsto(c,b\oplus d\oplus c,d)=(c,a,d),
	\end{align}
	the equalities for $B$, $C$, $D$ using the parity relation. These define local permutation unitaries $U_A,\dots,U_D$, each reversible over $\mb{F}_2^3$ and hence a valid unitary on $\mb{C}^8$. Applying $U_{\mathrm{IS}}=U_A\otimes U_B\otimes U_C\otimes U_D$ to \eqref{eq:raw43} gives
	\begin{equation}
		\begin{split}
			\ket{\psi_{K_4^{(3)}}}_{ABCD}= \frac{1}{4}\sum_{a,b,c,d=0}^{1}&\ket{c,a,d}_A\ket{b\oplus a\oplus d,a,d}_B\\
			&\ket{c,a,b\oplus a\oplus c}_C\ket{c,b\oplus d\oplus c,d}_D .
		\end{split}
		\label{eq:psi43}
	\end{equation}
	Its overlap with $\ket{\ghz{4}{8}}=\frac{1}{\sqrt{8}}\sum_{i=0}^{7}\ket{i}^{\otimes 4}$ is nonzero only when all four three-bit labels coincide, which by \eqref{eq:psi43} happens exactly when $b=a\oplus c\oplus d$, i.e.\ for $8$ of the $16$ assignments. Hence
	\begin{equation}
		\braket{\ghz{4}{8}}{\psi_{K_4^{(3)}}} = \frac{8}{4\sqrt{8}} = \frac{1}{\sqrt{2}}, \qquad F\!\left(\ket{\psi_{K_4^{(3)}}},\ket{\ghz{4}{8}}\right)=\frac{1}{2},
		\label{eq:overlap43}
	\end{equation}
	in agreement with \eqref{eq:F-IS}, since $F_{\mathrm{IS}}=2^{m-M}=2^{3-4}=1/2$. This exceeds the product benchmark $F_{\mathrm{prod}}=1/8$ by a factor of four, providing an explicit analytical separation that uses only fixed local unitaries and no classical communication.
	
	Finally, this example recovers and explains our earlier result. The quantum circuit is illustrated in Fig.~\ref{fig:k43_decoder_circuits} and the \textsc{swap}/\textsc{cnot} unitaries reported in~\cite{11500272} are precisely the information-set decoders \eqref{eq:decoder} of the $[4,3]$ code above; the previous \emph{ad hoc} construction is thus the $k=3$ instance of the general scheme, and Theorem~\ref{thm:optimality} turns the numerically observed value $1/2$ into a proven optimum.

	\section{\label{sec: conc}Conclusion}
	
	We have shown that LOSR quantum networks equipped with multipartite sources overcome the limitations that bind their bipartite-source counterparts, and we have done so constructively and for networks of any size. Identifying the hyperedges of a complete $k$-uniform network with the coordinates of a linear code, we proved that whenever the edges incident to each node form an information set, the code's local decoders are fixed local unitaries that prepare an $N$-party GHZ state with fidelity $d^{\,m-M}$ and no classical communication. This fidelity surpasses the bipartite bound exactly when $N<2k$, and for the complete hypergraphs $K_N^{(N-1)}$ it equals $1/d$, which we proved optimal among local-unitary strategies. Choosing $k=N-1$ therefore proves the conjecture of~\cite{11500272} for every $N$, replacing its single numerical case and \emph{ad hoc} unitaries by a general, analytic, and constructive result; the four-node network is recovered as the simplest instance, with its previously conjectured fidelity $1/2$ now established as a theorem.

	Several directions remain open. The advantage condition $N<2k$ invites a systematic study of how the attainable fidelity varies with $k$ at fixed $N$ and of which regular hypergraphs are optimal; the benchmark of Sec.~\ref{subsec:lu_optimization} already maps this dependence for small networks and could guide an analytic treatment beyond the boundary family. It would also be natural to target states other than GHZ, such as cluster and W states. Our optimality proof, moreover, concerns local unitaries on maximally entangled sources; extending it to general local channels and to optimization over the source states themselves would sharpen the characterization of low-latency networks. Finally, the most effective strategy in practice is likely to involve hybrid architectures that combine the strengths of both operational regimes: employing CC (SR) between near (far) nodes can enhance global network utility while maintaining acceptable latency thresholds.
	
	\begin{appendices}
		
		\section{Fidelity of the information-set construction}
		\label{app: fidelity}
		
		We prove Proposition~\ref{prop:fidelity}. The $M$ maximally entangled sources produce the trivial network state
		\begin{equation}
			\ket{\phi} = \bigotimes_{e\in E}\ket{\ghz{k}{d}}_e = d^{-M/2}\sum_{\vb{x}\in\mb{F}_d^{M}}\;\bigotimes_{i\in V}\ket{\vb{x}_{E(i)}}_i ,
			\label{eq:phi-raw}
		\end{equation}
		where $\vb{x}_{E(i)}$ collects the labels of the $m$ edges incident to node $i$. Each $\ket{\ghz{k}{d}}_e=d^{-1/2}\sum_{x_e}\ket{x_e}^{\otimes k}$ sends one copy of $\ket{x_e}$ to each of its $k$ endpoints, so node $i$ carries $\ket{\vb{x}_{E(i)}}$; the normalization $d^{-M/2}$ follows since there are $d^M$ terms. Applying $U_{\mathrm{IS}}$ from \eqref{eq:U-IS} replaces each local label by its decoding,
		\begin{equation}
			U_{\mathrm{IS}}\ket{\phi} = d^{-M/2}\sum_{\vb{x}\in\mb{F}_d^{M}}\;\bigotimes_{i\in V}\ket{G_i^{-1}\vb{x}_{E(i)}}_i .
		\end{equation}
		With the target $\ket{\ghz{N}{d^m}}=d^{-m/2}\sum_{\vb{y}\in\mb{F}_d^{m}}\ket{\vb{y}}^{\otimes N}$, the overlap is
		\begin{equation}
			\mel{\ghz{N}{d^m}}{U_{\mathrm{IS}}}{\phi} = d^{-m/2}d^{-M/2}\sum_{\vb{y}\in\mb{F}_d^{m}}\sum_{\vb{x}\in\mb{F}_d^{M}}\;\prod_{i\in V}\braket{\vb{y}}{G_i^{-1}\vb{x}_{E(i)}} .
		\end{equation}
		The inner product $\braket{\vb{y}}{G_i^{-1}\vb{x}_{E(i)}}$ equals $1$ if $\vb{x}_{E(i)}=G_i\vb{y}$ and $0$ otherwise, so the product over $i$ is nonzero only when $\vb{x}_{E(i)}=G_i\vb{y}$ for \emph{every} node simultaneously. Since $\bigcup_i E(i)=E$, these constraints fix all coordinates of $\vb{x}$, and they are jointly consistent for exactly one $\vb{x}$ per $\vb{y}$, namely the codeword $\vb{x}=G\vb{y}$ (for which $\vb{x}_{E(i)}=(G\vb{y})_{E(i)}=G_i\vb{y}$). Conversely, the information-set condition \eqref{eq:info-set-condition} forbids any non-codeword from satisfying all constraints. Hence exactly the $d^m$ codewords contribute, each once:
		\begin{equation}
			\mel{\ghz{N}{d^m}}{U_{\mathrm{IS}}}{\phi} = d^{-m/2}d^{-M/2}\cdot d^{m} = d^{(m-M)/2},
		\end{equation}
		and $F=\abs{\mel{\ghz{N}{d^m}}{U_{\mathrm{IS}}}{\phi}}^2 = d^{\,m-M}$, as claimed. \qed
		
		\section{Proof of optimality (Theorem~\ref{thm:optimality})}
		\label{app: optimality}
		
		For the complete hypergraph $K_N^{(N-1)}$ with $N\geq 4$, the $M=N$ edges are $e_j=V\setminus\{j\}$, and node $i$ holds one qudit from every edge $e_j$ with $j\neq i$. Writing $\vb{x}_{i}=(x_j)_{j\neq i}\in\mb{F}_d^{N-1}$, the trivial state reads
		\begin{equation}
			\ket{\phi} = d^{-N/2}\sum_{\vb{x}\in\mb{F}_d^{N}}\;\bigotimes_{i\in V}\ket{\vb{x}_{i}}_i ,
			\label{eq:phi-NN1}
		\end{equation}
		and the target is $\ket{\ghz{N}{d^{N-1}}}$. The proof bounds $\F_{K_N^{(N-1)}}$ from above by $1/d$; with achievability this gives equality.
		
		\medskip\noindent\emph{Achievability.} By Proposition~\ref{prop:fidelity}, $U_{\mathrm{IS}}$ attains $F_{\mathrm{IS}}=d^{\,m-M}=d^{(N-1)-N}=1/d$, hence $\F_{K_N^{(N-1)}}\geq 1/d$. 
		
		\medskip\noindent\emph{Upper bound.} The argument rests on the maximal overlap of two bipartite pure states under local unitaries, a consequence of von Neumann's trace inequality~\cite{vonNeumann1937,Mirsky1975}.
		\begin{lemma}\label{lem:LU}
			Let $\ket{\alpha},\ket{\beta}\in\mc{H}_X\otimes\mc{H}_Y$ have squared Schmidt coefficients $\{q_a\}$ and $\{p_a\}$ (decreasing, zero-padded), and let $V_X,V_Y$ be unitaries on $\mc{H}_X,\mc{H}_Y$. Then
			\begin{equation}
				\max_{V_X,V_Y}\abs{\bra{\alpha}(V_X\otimes V_Y)\ket{\beta}} = \sum_a\sqrt{q_a\,p_a}.
				\label{eq:LU-lemma}
			\end{equation}
		\end{lemma}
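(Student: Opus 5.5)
We reduce the overlap to a trace inner product and then apply von Neumann's trace inequality, giving both the upper bound and its attainment. First write each state in its Schmidt form,
\begin{equation*}
\ket{\alpha}=\sum_a \sqrt{q_a}\,\ket{u_a}\ket{v_a},\qquad \ket{\beta}=\sum_b \sqrt{p_b}\,\ket{s_b}\ket{t_b}.
\end{equation*}
Fix orthonormal bases of $\mc{H}_X$ and $\mc{H}_Y$, and identify each bipartite vector with its coefficient matrix: $\ket{\alpha}\leftrightarrow A$ and $\ket{\beta}\leftrightarrow B$, where $\ket{\gamma}=\sum_{xy}\Gamma_{xy}\ket{x}\ket{y}$. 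If $\dim\mc{H}_X\neq\dim\mc{H}_Y$, the matrices are rectangular; zero-padding the spectra handles this. Under the identification, $(V_X\otimes V_Y)\ket{\beta}\leftrightarrow V_X B V_Y^{T}$, and the inner product becomes $\bra{\alpha}(V_X\otimes V_Y)\ket{\beta}=\operatorname{tr}(A^\dagger V_X B V_Y^{T})$. The singular values of $A$ are $\sqrt{q_a}$ and those of $B$ are $\sqrt{p_a}$, since the singular value decomposition of the coefficient matrix is exactly the Schmidt decomposition.

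\textbf{Upper bound.} Von Neumann's trace inequality~\cite{vonNeumann1937,Mirsky1975} gives $\abs{\operatorname{tr}(A^\dagger W)}\le\sum_a\sigma_a(A)\sigma_a(W)$ for any matrices $A,W$ of equal shape, with singular values sorted decreasingly. Apply it with $W=V_XBV_Y^T$. Multiplying by unitaries on either side leaves the singular values unchanged, so $\sigma_a(W)=\sqrt{p_a}$, and
\begin{equation*}
\abs{\bra{\alpha}(V_X\otimes V_Y)\ket{\beta}}\le\sum_a\sqrt{q_a p_a}.
\end{equation*}

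\textbf{Attainment.} Let $V_X$ be a unitary with $V_X\ket{s_a}=\ket{u_a}$, and $V_Y$ a unitary with $V_Y\ket{t_a}=\ket{v_a}$, for every index $a$ at which both Schmidt coefficients are nonzero. Extend both maps arbitrarily to unitaries on the orthogonal complements, which is possible because the Schmidt vectors are orthonormal and both families are indexed in decreasing order. Then
\begin{equation*}
(V_X\otimes V_Y)\ket{\beta}=\sum_a\sqrt{p_a}\,\ket{u_a}\ket{v_a}+(\text{terms orthogonal to every }\ket{u_a}\ket{v_a}\text{ with }q_a>0).
\end{equation*}
Its overlap with $\ket{\alpha}$ is therefore $\sum_a\sqrt{q_ap_a}$. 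This matches the upper bound, so the maximum equals $\sum_a\sqrt{q_ap_a}$ and is attained by these unitaries.

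The only delicate step is citing the trace inequality in a form that covers the complex, rectangular case with absolute value. If needed, it can be proved from the singular value decomposition via the Birkhoff--von Neumann theorem: $\abs{\operatorname{tr}(\Sigma_A U\Sigma_W U')}\le\sum_{ab}\sigma_a\tau_b\abs{U_{ab}U'_{ba}}$, and the matrix $(\abs{U_{ab}U'_{ba}})_{ab}$ is doubly substochastic by Cauchy--Schwarz, so the right-hand side is at most $\sum_a\sigma_a\tau_a$ by the rearrangement inequality.
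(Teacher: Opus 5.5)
Your proof is correct and follows the same route as the paper's: you identify the states with coefficient matrices, write the overlap as $\operatorname{Tr}(A^\dagger V_X B V_Y^{T})$, and bound it by von Neumann's trace inequality using unitary invariance of singular values. You also give the attainment step (aligning Schmidt bases via $V_X\ket{s_a}=\ket{u_a}$, $V_Y\ket{t_a}=\ket{v_a}$), which the paper's proof leaves out even though the lemma asserts equality; only the upper bound is needed for Theorem~\ref{thm:optimality}, but your version is the one that actually justifies the stated maximum.
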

		\begin{proof}
			Let $A$ and $B$ be the coefficient matrices of $\ket{\alpha}$ and $\ket{\beta}$ in the product basis; their singular values are the respective Schmidt coefficients, $\sigma_a(A)=\sqrt{q_a}$ and $\sigma_a(B)=\sqrt{p_a}$. The local unitaries act on the coefficient matrix as, $(V_X\otimes V_Y)\ket{\beta}\leftrightarrow V_X B V_Y^{T}$, and the overlap equals the Hilbert--Schmidt inner product of coefficient matrices, $\braket{\alpha}{\psi}=\operatorname{Tr}(A^{\dagger}M)$ for $\ket{\psi}\leftrightarrow M$. Hence
			\begin{equation}
				\bra{\alpha}(V_X\otimes V_Y)\ket{\beta}=\operatorname{Tr}\!\big(A^{\dagger}V_X B V_Y^{T}\big).
			\end{equation}
			Unitaries preserve singular values, so $\sigma_a(V_X B V_Y^{T})=\sigma_a(B)=\sqrt{p_a}$, and von Neumann's trace inequality gives $\abs{\operatorname{Tr}(A^{\dagger}V_X B V_Y^{T})}\leq\sum_a\sigma_a(A)\sigma_a(B)=\sum_a\sqrt{q_a p_a}$ for all $V_X,V_Y$.
		\end{proof}
		
		Fix a bipartition $V=S\cup S^{c}$ ($S\cap S^{c}=\varnothing$) with $2\leq\abs{S}\leq N-2$, which exists because $N\geq 4$. Every product of single-node unitaries is of the form $U_S\otimes U_{S^c}$, a more general case than $\otimes_{i\in V}U_i$, so considering them only increases the maximum. As the overlap is nonnegative and squaring is increasing on $[0,\infty)$, the maximization commutes with squaring, and
		\begin{equation}
			\F_{K_N^{(N-1)}} \leq \max_{U_S,U_{S^c}}\abs{\bra{\ghz{N}{d^{N-1}}}(U_S\otimes U_{S^c})\ket{\phi}}^2 = \Big(\max_{U_S,U_{S^c}}\abs{\bra{\ghz{N}{d^{N-1}}}(U_S\otimes U_{S^c})\ket{\phi}}\Big)^2 .
			\label{eq:relax}
		\end{equation}
		We read off the two Schmidt spectra across the cut. Writing $\ket{\ghz{N}{d^{N-1}}}=d^{-(N-1)/2}\sum_{\vb{y}}\ket{\vb{y}}_S^{\otimes\abs{S}}\otimes\ket{\vb{y}}_{S^c}^{\otimes\abs{S^c}}$, the target has Schmidt rank $d^{N-1}$ with all $q_a=d^{-(N-1)}$. Each source $e_j$ has at least one endpoint on each side (since $\abs{S},\abs{S^c}\geq 2$), so across the cut it has rank $d$ with uniform coefficients $1/d$; the $N$ independent sources give $\ket{\phi}$ rank $d^N$ with all $p_a=d^{-N}$. Both spectra are uniform and the GHZ one is supported on $d^{N-1}$ values, so Lemma~\ref{lem:LU} yields
		\begin{equation}
			\max_{U_S,U_{S^c}}\abs{\bra{\ghz{N}{d^{N-1}}}(U_S\otimes U_{S^c})\ket{\phi}} = \sum_{a=1}^{d^{N-1}}\sqrt{d^{-(N-1)}\,d^{-N}} = d^{-1/2}.
		\end{equation}
		With \eqref{eq:relax}, $\F_{K_N^{(N-1)}}\leq(d^{-1/2})^2=1/d$. Together with achievability, $\F_{K_N^{(N-1)}}=1/d$, attained by $U_{\mathrm{IS}}$. \qed
		
		\medskip\noindent\emph{Remarks.} (i) The relaxation is tight: the bipartite bound coincides with the value reached by the per-node construction, so $U_{\mathrm{IS}}$ is optimal. (ii) The argument needs $N\geq 4$ (a cut with at least two parties per side), which is the multipartite regime $k\geq 3$; the case $N=3$ ($k=2$) is the bipartite baseline already settled by~\cite{25Otfried}. (iii) Optimality is established within local unitaries on maximally entangled sources; general local channels enlarge the feasible set and can only raise $\F_G$~\cite{25Otfried}, so whether they exceed $1/d$ is left open.
	\end{appendices}
	
	\section*{Declarations}
	\subsection*{Ethics approval and consent to participate}
	Not applicable.
	\subsection*{Funding}
	Not applicable.
	
	\subsection*{Acknowledgments}
	This work was supported by the project LUQCIA Funded by the European Union – Next Generation EU, with the collaboration of the Department of Media, Connectivity and Digital Policy of the Luxembourgish Government in the framework of the RRF program.
	
	\subsection*{Code availability}
	All data and Figures are original and generated by the authors using appropriate licensed software. Scripts and routines used to produce these figures are available from the corresponding author upon reasonable request.
	
	\subsection*{Authors' contribution}
	LO and ST contributed to the idea. LO and ST developed the theory and wrote the manuscript. SC improved the manuscript and supervised the research. All authors contributed to the analysis and discussion of the results and improved the manuscript. During the preparation of this manuscript, the authors used Generative AI to improve the text's spelling, grammar, and readability. After using this tool, all authors reviewed and edited the content as needed and took full responsibility for the final version of the manuscript.

	\subsection*{Competing interests}
	The authors declare no competing financial or non-financial interests.
	
	\bibliography{ref_v2.bib}
	
\end{document}